\newenvironment{tablenotes}{%
	\begin{list}{\labelitemi}{%
		\setlength{\leftmargin}{2em}%
		\setlength{\itemsep}{0pt}%
		}}{%
		\end{list}\vspace*{-.8\baselineskip}}
\newcommand{\set}[1]{\{#1\}}
\newcommand\eat[1]{}
\newcommand{\nash}{\text{{NS}}\xspace}
\newcommand{\indiv}{\text{{IS }}\xspace}
\newcommand{\contract}{\text{CIS}\xspace}
\newcommand{\pref}{\succsim\xspace}
\newcommand{\Pref}[1][]{
	\ifthenelse{\equal{#1}{}}{\mathrel \succsim}{\mathop{\succsim_{#1}}}
}                                          
\newcommand{\sPref}[1][]{                  
	\ifthenelse{\equal{#1}{}}{\mathrel \succ}{\mathop{\succ_{#1}}}
}                                          
\newcommand{\Indiff}[1][]{                 
	\ifthenelse{\equal{#1}{}}{\mathrel \sim}{\mathop{\sim_{#1}}}
}
\newcommand{\prefset}[1][]{\ifthenelse{\equal{#1}{}}{\mathcal{\succsim}}{\mathcal{\succsim}_{#1}}}
\newcommand{\midd}{\makebox[2ex]{$:$}}
\newcommand{\allelse}[1][3em]{\;\rule[.5ex]{#1}{.4pt}\;}
\newlength{\wordlength}
\title{Stable marriage and roommate problems with individual-based stability\thanks{This material is based upon work supported by the Deutsche Forschungsgemeinschaft under grant BR 2312/10-1
}}
\author{Haris Aziz}
\institute{%
  Department of Informatics,
  Technische Universit\"at M\"unchen \\
  85748 Garching bei M\"unchen, Germany \\
  \email{aziz@in.tum.de}
  }
\begin{document}

\maketitle

\begin{abstract}
	Research regarding the stable marriage and roommate problem has a long and distinguished history in mathematics, computer science and economics. Stability in this context is predominantly core stability or one of its variants in which each deviation is by a group of players. There has been little focus in matching theory on stability concepts such as Nash stability and individual stability in which the deviation is by a single player. Such stability concepts are suitable especially when trust for the other party is limited, complex coordination is not feasible, or when only unmatched agents can be approached. Furthermore, weaker stability notions such as individual stability may in principle circumvent the negative existence and computational complexity results in matching theory. 
	 We characterize the computational complexity of checking the existence and computing individual-based stable matchings for the marriage and roommate settings. 
	 One of our key computational results for the stable marriage setting also carries over to different classes of hedonic games for which individual-based stability has already been of much interest.
\end{abstract}

\section{Introduction}

In stable matching problems, the aim is to match agents in a stable manner to objects or to other agents, keeping in view the preference of the agents involved. These problems have significant applications in matching residents to hospitals, students to schools, etc. and have received tremendous interest in the mathematical economics, computer science and operations research communities~\citep[see \eg ][]{GuIr89a,RoSo90a}. Informally, a matching is deemed `stable' if the agents do not have an incentive to deviate to achieve a better matching for themselves. 
In the matching theory literature, the predominant notion of stability is indeed the core in which no pair of agents prefer to be matched to each other than remain in their current matching. \emph{Core stability} (also simply called stability) has been extensively investigated in the context of the \emph{stable marriage (SM)} problem~\citep{GaSh62a} and \emph{stable roommate (SR)} problem~\citep{Irvi85a} which are two of the most fundamental settings in matching theory.  
A comprehensive survey of the stable marriage and roommate problems is present in \citep{GuIr89a}.

We formulate the stable marriage and stable roommate settings as \emph{marriage games} and \emph{roommate games}. Both of these games are basic subclasses of \emph{hedonic coalition formation games} in which an agent's preference of a partition only depends on the coalition (of arbitrary size) he is a member of and not on how the remaining agents are grouped~\citep[see \eg ][]{BoJa02a,Hajd06a}.  Of course, in the roommate and marriage games, feasible partitions simply correspond to matchings because each coalition is of size at most two. 
The main focus in hedonic games has been on different natural notions of stability of partitions. 
 The stability concepts include individual-based stability concepts (\emph{Nash stability (NS)}, \emph{individual stability (IS)}, and \emph{contractual individual stability (CIS)}) and group-based stability concepts (\emph{core (C)} and \emph{strict core (SC)})~\citep[see \eg][]{BoJa02a}. Another individual-based stability concept is \emph{contractual Nash stability (CNS)} which is stronger than CIS and is defined in an analogous way to IS~\citep{Sudi07b}.

In this paper, we characterize the complexity of checking existence of and computing individual-based stable outcomes in marriage and roommate games. A number of existence results are also presented. Our results shed further light on the dynamics of stability concepts like Nash stability in fundamental settings such as marriage games. 

There are a number of reasons why individual-based stability in matching models may be of interest.
Individual-based stability applies in situations when forming arbitrary new coalitions may be `\emph{costly or may require complex coordination among the players}'\citep{Papa07b}. Furthermore, `\emph{if information on the preferences of other players is scarce [$\ldots$], then considering the actions of individual players only may be quite compelling}'\citep{Papa07b}. 
Since marriage games and roommate games may not admit a strict core stable and core stable matching respectively~\citep{GuIr89a}, it makes sense to examine weaker stability notions such as IS. IS may also apply to other matching models. For e.g., in hospital-resident matching, the hospital may not deviate with a resident and may accept any acceptable candidate (with a minimum level of competency). 

Since marriage and roommate games are two of the most fundamental classes of hedonic games, our results have bearing on the coalition formation literature.  
In fact, one of our key computational results for marriage games also carries over to different classes of hedonic games for which individual-based stability has already been of much interest. 
Finally, we point out that marriage and roommate settings are also one of the most basic models in network formation which further motivates our study.

\section{Related literature}
	The complexity of computing partitions which are Nash stable or individual stable has previously been examined for some classes of hedonic games such as additively separable hedonic games~\citep[see \eg ][]{Olse09a,SuDi10a}, and hedonic games represented by individually rational lists of coalitions (RIRLC)~\citep{Ball04a}. \citet{Sudi07b} introduced CNS and showed that a CNS partition is guaranteed for separable hedonic games satisfying weak mutuality. 
\citet{Papa07b} used restrictions on acceptable coalitions to characterize classes of hedonic coalition formation games with strict preference for which Nash stable and individual stable partitions are guaranteed to exist. \footnote{As a result, \citet{Papa07b} proved that for marriage games \emph{with strict preferences}, an individually stable partition is guaranteed to exist. The proof is non-constructive and an algorithm to compute an IS matching was not presented.}

		For the roommate and marriage settings, there has been considerable work on the \emph{stable marriage (SM)} problem and \emph{stable roommate (SR)} problem. Stability in this regard is mostly \emph{core stability} (also simply called stability). In the stable marriage (SM) problem , the set of agents is partitioned into men and women; men and women express strict preferences over all their counterparts; and the aim is to find a stable matching in which men and women are matched to each other. The stable roommate problem (SR) is the unisex generalization of the stable roommate problem in which roommates are paired with each other in a stable matching~\citep{Irvi85a}.

	Subsequently, variants of the problems SM and SR have been examined: i) SMI and RMI --- stable marriage and stable roommate problems with \emph{incomplete} preference preference lists thereby signifying that the agents not in a preference list of an agent are unacceptable to the agent; ii) SMT and RMT --- cases which allow ties/indifferences in the preference; iii) and finally SMTI and RMTI --- cases which allow both ties and incomplete lists. The reader may refer to Table~\ref{table:rg-mg-complexity3} which summarizes the complexity results in the literature concerning stable matchings where the stability concept concerns deviation by groups or pairs of players. We will cover all the stable marriage and stable roommate settings mentioned above but instead of considering core stability, we will consider individual-based stability concepts.
	
	Recently, \citet{AGM+11a} and \citet{NSV+11a} modeled uncoordinated marriage games and roommate games with strict preferences via a normal form game in which the pure Nash equilibria of the normal form game coincide with core stable matchings. 

		\begin{table*}[tbh]
		\centering
			\scalebox{0.9}{
		\centering

		\begin{tabular}{lllll}

		\toprule

		Game&Preference&Problem&Stability&Complexity\\
		&Restrictions&setting&concept&\\

		\midrule
		
		
	RG&No&	(SRT)&C&NP-C~\citep{Ronn90a}\\
&unacceptability&&&\\

		\midrule

		RG&Strict&(SRI)&C&in P~\citep{Irvi85a}\\
		\midrule
		RG&&(SRTI)&SC&in P~\citep{Scot05a}\\
		\midrule

		MG&&(SMTI)&C&in P~\citep{GaSh62a}$^a$\\
		MG&&(SMTI)&SC&in P~\citep{Irvi94a}\\

		\bottomrule
		\end{tabular}
		}
		\begin{tablenotes}
			\item[a] 
			Only combination of game setting and stability concept in the table for which a stable matching is guaranteed to exist.		
		\end{tablenotes}
		\caption{Complexity of computing a stable matching in marriage and roommate games for group-based stability: literature summary. In roommate and marriage games, core stability and pairwise-stability coincides.}
		\label{table:rg-mg-complexity3}

		\end{table*}
		

		\vspace{-2.5em}

\section{Preliminaries}

\paragraph{Hedonic games.}
We review the terminology and notation used in this paper.
Let~$N$ be a set of~$n$ players. A \emph{coalition} is any non-empty subset of~$N$. By $\mathcal{N}_i$ we denote the set of all coalitions player~$i$ may belong to, \ie $\mathcal N_i=\set{S\subseteq N\midd i\in S}$. A \emph{coalition structure}, or simply a \emph{partition}, is a partition~$\pi$ of the players $N$ into coalitions, where $\pi(i)$ is the coalition player~$i$ belongs to. 

A \emph{hedonic game} is a pair~$(N,\pref)$, where $\pref=(\pref_1,\dots,\pref_n)$ is a \emph{preference profile} specifying the preferences of each player~$i$ as a binary, complete, reflexive, and transitive \emph{preference relation~$\Pref[i]$} over~$\mathcal N_i$. If~$\pref_i$ is also anti-symmetric we say that~$i$'s preferences are \emph{strict}. Note that $S\sPref[i]T$ if $S\Pref[i]T$ but not~$T\Pref[i]S$---\ie if~$i$ \emph{strictly prefers} $S$ to~$T$---and $S\Indiff[i]T$ if both $S\Pref[i]T$ and $T\Pref[i]S$---\ie if~$i$ is \emph{indifferent} between~$S$ and~$T$.

For a player~$i$, a coalition~$S$ in~$\mathcal N_i$ is \emph{acceptable} if for~$i$ being in~$S$ is at least as preferable as being alone---\ie if $S\Pref[i]{\set i}$---and \emph{unacceptable} otherwise. If $\{i,j\} \sPref_i \{i\}$, then we say that $i$ \emph{likes} $j$. 
We also say that partition~$\pi$ is \emph{acceptable} or \emph{unacceptable} to a player~$i$ according to whether~$\pi(i)$ is acceptable or unacceptable to~$i$, respectively. Moreover,~$\pi$ is \emph{individually rational (IR)} if~$\pi$ is acceptable to all players.

\paragraph{Roommate \& marriage games.}
A \emph{roommate game (RG)} is a hedonic game $(N,\pref)$ in which for each $i\in N$, coalitions of size three or more are unacceptable and preferences $\pref$ over other players are extended naturally over preferences over coalitions in the following way: 
$\{i\} \cup \{j\} \succsim_i \{i\} \cup \{k\}$ if and only if $j\succsim_i k$ for all $i,j,k\in N$.
In the matching theory literature, preferences $\pref_i$ of player $i$ over other players are represented via preferences list so that if $j\neq i$ is not on the preference list of $i$, then $j$ is unacceptable to $i$. A \emph{marriage game (MG)} is a roommate game in which $N$ is partitioned into two sets $M$ (men) and $W$ (women) such that each agent considers a member of his own sex unacceptable. By marriage games with no unacceptability, we will mean preferences such that each player considers a member of the opposite sex acceptable.\footnote{Such a setting is also referred to as marriage problem with complete lists.} Similarly, by roommate games with no unacceptability, we mean that each player consider all other players acceptable. When we refer to a matching, we will mean the obvious partition in which the unmatched players are in singleton coalitions.

\paragraph{Stability Concepts.}

We now present the standard stability concepts for hedonic games. 
The following are standard stability concepts based on deviations by individual players~\citep[see \eg][]{BoJa02a,Sudi07b,GaSa10a,GaSa11a,ABS11b,Olse09a, SuDi10a}.

	\begin{itemize}
\item A partition is  \emph{Nash stable (NS)} if no player can benefit by 
	moving from his coalition to another (possibly empty) coalition $T$.
\item A partition is \emph{individually stable (IS)} if no player can
	benefit by moving from his coalition to another existing (possibly empty) coalition $T$  while not making the members of $T$ worse off. 
	\item A partition is  \emph{contractually individually stable (CIS)} if no
	player can benefit by moving from his coalition $S$ to another existing (possibly empty) coalition
	$T$ while making neither the members of $S$ nor the members of
	$T$ worse off. 
	\item 
	A partition is \emph{contractual Nash stable (CNS)} if no player can
	benefit by moving from his coalition $S$ to another existing (possibly empty) coalition $T$  while not making the members of $S$ worse off.\footnote{A suitable example if that of a criminal organization where joining may be easy but moving out requires permission from the other members. In the frivolous marriage parlance, this can be interpreted as requiring permission for divorce.}
	
	\end{itemize} 
	
	Depending on the context, we will utilize abbreviations NS, IS, CNS, and IR etc. either for adjectives (for e.g. IS for individually stable) or for nouns (for e.g. IS for individual stability). 
	\emph{Core (C) } and \emph{strict core (SC)} are defined via deviations by a group or pair of players.
	We refer to \citep{BoJa02a} for the precise definitions. In the restricted domain of roommate and marriage games, core stability and strict core stability correspond respectively to pairwise stability and strong pairwise stability.  

Based on their definitions, we see how stability concepts are related to each other. 
The inclusion relationships between stability concepts depicted in Figure~1 follow from the definitions of the concepts.

\vspace{-0.5em}

\begin{figure}
	    \centering
	    \scalebox{0.9}{
		\begin{tikzpicture}
			\tikzstyle{pfeil}=[->,>=angle 60, shorten >=1pt,draw]
			\tikzstyle{onlytext}=[]

			\node[onlytext] (NS) at (1,0) {NS};
			\node[onlytext] (SC) at (3,0) {SC};
			\node[onlytext] (IS) at (2,-1.5) {IS};
			\node[onlytext] (CNS) at (0,-1.5) {CNS};

			\node[onlytext] (C) at (4,-1.5) {C};
			\node[onlytext] (CIS) at (1,-3) {CIS};
			\node[onlytext] (IR) at (3,-3) {IR};

			\draw[pfeil] (NS) to (IS);
			\draw[pfeil] (SC) to (IS);
			\draw[pfeil] (IS) to (CIS);
			\draw[pfeil] (SC) to (C);

			\draw[pfeil] (C) to (IR);
			\draw[pfeil] (IS) to (IR);
			\draw[pfeil] (NS) to (CNS);
			\draw[pfeil] (CNS) to (CIS);

		\end{tikzpicture}
		}
		\caption{Inclusion relationships between stability concepts. E.g., every NS partition is also IS.}\label{fig:tnfigure}
	  \end{figure}
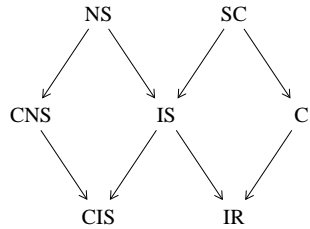

\vspace{-1.5em}
	\section{Negative results} \label{sec:neg}

We first start with some bad news which contrasts sharply with the fact that a core stable matching can be computed efficiently for marriage games. 

	\begin{theorem}\label{th:mg-ns-npc}
		For marriage games, checking whether there exists a NS matching is NP-complete.
	\end{theorem}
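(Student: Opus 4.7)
The plan is to show NP-completeness in two parts. Membership in NP is immediate: given a candidate matching $\mu$, we verify Nash stability by checking for each agent $i$ whether $i$ strictly prefers either to become unmatched or to join some currently-unmatched opposite-sex agent. Joining an occupied pair would form a size-three coalition, which is unacceptable in a marriage game, so these are the only deviations to consider; each can be tested against $i$'s preference list in polynomial time.

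For hardness, first note that in the fully strict, complete preference case the Gale--Shapley matching is already NS: if some matched man $m$ strictly preferred an unmatched woman $w'$ to his partner, then $w'$, being unmatched and finding every man acceptable under complete lists, would prefer $m$ to being alone, so $(m,w')$ would block, contradicting core stability. Thus the hardness must come from incomplete preference lists (or ties). I would reduce from 3-SAT, exploiting incomplete lists to produce asymmetric acceptability patterns that core stability tolerates but NS does not.

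For each variable $x_i$ I build a variable gadget with a small bipartite configuration of men and women whose only NS-consistent local matchings are two ``states'', corresponding to $x_i$ being true and false; these states differ in which particular ``literal agent'' of the gadget is left unmatched. For each clause $C_j$ I build a clause gadget containing an agent $a_j$ whose only acceptable partners are the literal agents of $C_j$ across the variable gadgets; in any NS matching, $a_j$ must be matched to an unmatched literal agent, which is possible iff some literal of $C_j$ is set to true. If the clause is unsatisfied, $a_j$ is forced to remain singleton while finding some opposite-sex agent acceptable, which breaks NS.

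The main obstacle will be gadget design. Within a variable gadget the preferences must be tuned so that only the two intended configurations survive NS --- every other local matching should admit a single-agent deviation, essentially embedding a small ``non-NS cycle'' that can only be broken in the two chosen ways. The clause gadget and its interface with variable gadgets must also avoid spurious cross-gadget deviations, since an agent in one gadget should not benefit by moving to an unmatched agent of another. Once these are in place, the two directions of the correctness argument are routine: a satisfying assignment yields an NS matching by assembling the gadget states and using satisfied literals to pair up the $a_j$, while any NS matching yields a satisfying assignment by reading off the state of each variable gadget from which literal agents are unmatched.
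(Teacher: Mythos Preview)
Your NP membership argument is fine. The hardness plan, however, has a genuine gap in the clause gadget, and the variable gadget is only asserted to exist.

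\medskip
\noindent\textbf{The clause gadget does not force instability.} You say that if $C_j$ is unsatisfied, $a_j$ ``is forced to remain singleton while finding some opposite-sex agent acceptable, which breaks NS.'' But NS is violated only if $a_j$ can \emph{move} to a coalition it strictly prefers. In a marriage game the only beneficial moves for a singleton $a_j$ are to an \emph{unmatched} acceptable opposite-sex agent; joining an occupied pair yields a size-three coalition, which $a_j$ itself finds unacceptable. Under your design, the literal agents that $a_j$ likes are exactly those that are \emph{matched} inside their variable gadgets when the corresponding literal is false. Hence when $C_j$ is unsatisfied, every agent $a_j$ finds acceptable is already in a pair, $a_j$ has no profitable deviation, and the matching can be NS despite the unsatisfied clause. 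Something extra is needed --- e.g., a permanently unmatched ``sink'' agent that $a_j$ likes strictly less than any literal agent but strictly more than being alone, so that an unmatched $a_j$ always has a deviation. Without such a device the reduction collapses.

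\medskip
\noindent\textbf{The variable gadget is the whole proof.} You acknowledge that tuning a bipartite gadget to admit exactly two NS-consistent local states (and to interact cleanly with clause agents without creating cross-gadget deviations) is ``the main obstacle,'' but you do not build it. For NS in marriage games this is genuinely nontrivial: any agent left unmatched in a state must have all its acceptable opposite-sex agents occupied in that state, and any matched agent must not strictly prefer an unmatched opposite-sex agent anywhere in the instance. Producing two such states that flip which literal agent is free, while remaining robust when a clause agent later pairs with that free agent, is the substantive content of the reduction; as written, the proposal is a plan rather than a proof.

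\medskip
\noindent\textbf{How the paper proceeds.} The paper avoids SAT gadgets entirely and reduces from \textsc{MinimumMaximalMatching} on bipartite graphs. The original bipartite vertices $A\cup B$ become men and women who mutually like exactly their graph neighbours; additionally there are $n-k$ ``filler'' agents $X$ (liked by every $a\in A$ below its $B$-neighbours) and a single agent $y$ whom every $x\in X$ likes but who wants to be alone. A maximal matching of size at most $k$ leaves at least $n-k$ free $A$-agents to absorb all of $X$, yielding NS; otherwise some $x$ is unmatched and has a Nash deviation to the singleton $y$, so no NS matching exists. The ``sink'' $y$ is exactly the missing ingredient in your clause gadget, and the maximality condition plays the role your variable gadgets were meant to play.
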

	\begin{proof}[Sketch]
We present a polynomial-time reduction from {\sc MinimumMaximalMatching} (MMM) to checking whether there exists a NS matching for a marriage game.\\

	\noindent
		Name: {\sc MinimumMaximalMatching} (MMM).\\ 
		Instance: Graph $G=(V,E)$ and integer $k\in Z^+$. \\
		Question: Does G have a maximal matching M with size $\leq k$?	\\

		{\sc MinimumMaximalMatching} (MMM) is NP-complete even for subdivision graphs~\citep{HoKi93a,MII+02a}. 
		Let $G=(V,E)$ and and integer $k\in Z^+$ be the instance of MMM. Graph $G$ is the subdivision graph of some graph $G'=(V',E')$ such that $V=V'\cup E'$ and $E=\{\{e,v\}\midd e\in E', \textrm{~and~} v \in V' \textrm{~and~} v~~\textrm{is incident to $e$ in $G'$}\}$. It is easy to see that $G$ is a bipartite graph where $V=A \cup B$. We may assume that $|A|=|B|=n$. \footnote{If this were not the case and $|A|=|B|+r$, then we can add $r$ vertices $a_1, \ldots, a_r$ to $A$ and $2r$ vertices to $b_1,\ldots b_r$, $c_1, \ldots, c_r$ to $B$ where $a_i$ is adjacent to $b_i$ and $c_i$ for each $i$ ($1\leq i\leq r$). Then $G$ has a maximal matching of size $k$ if and only if the reduced graph has a maximal matching of size $k+r$.} 

		We construct a marriage game $(N,\pref)$ where $N=V\cup X \cup\{y\}$ where $X=\{x_1,\ldots, x_{n-k}\}$ and the player preferences are as follows where players in the same set in the preference list are equally preferred:
		\begin{align*}
	a 	&: \{b'\in B\midd \{a,b'\}\in E\} \succ_a X \succ_a a \succ_a \allelse 	& \forall a\in A\\
	b   &:\{a'\in A \midd \{a',b\} \in E\} \succ_b b \succ_b \allelse	&\forall b\in B\\
	x   &:A \sim y \succ_x x \succ_x \allelse	&\forall x\in X\\
	y  &:y \succ_y \allelse	&
	\end{align*}
		Then, the claim is that there exists a maximal matching of size at most $k$ if and only if there exists a Nash stable matching.

	If there exists a maximal matching $M$ of size at most $k$ of $(V,E)$, there are at most $k$ players in $A$ which are matched. This implies that there are least $n-k$ players in $A$ which are not matched in $M$. We match $n-k$ players from $A$ which are not in matched in $M$ with players in $X$. Therefore all the players in $X$ are perfectly matched to players in $A$. Since each player in $X$ is matched to a player in $A$ and has no incentive to deviate to $y$. Each player in $A$ is either matched to a player in $B$ or $X$ or is unmatched. If $a\in A$ is matched to an acceptable player $b\in B$, it is perfectly happy and has no incentive to deviate. If $a\in A$ is not matched to some acceptable player in $B$, it is either in a singleton coalition or is paired with a player in $X$. In either case, $a$ does not have an incentive to deviate elsewhere because all players in $X$ are perfectly matched, and there is no unmatched $b\in B$ such that $(a,b)\in E$. If this were the case, then $M$ was not a maximal matching. Similarly, each unmatched $b\in B$ has no incentive to deviate to an unmatched $a\in A$ because of maximality of $M$.	

	Assume that each maximal matching in $G$ has size greater than $k$. Then for any such matching $M'$, there are at most $n-k-1$ unmatched players in $A$ which need to be matched with $n-k$ players in $X$. Therefore one player in $X$ will not be matched to a player in $A$ and will try to match with $y$ who wants to be alone. Therefore there exists no NS matching. It could have been the case that if $M'$ was not maximal, there would be enough free players from $A$ which could cater for players in $X$. Consider a matching $M''$ of size less than or equal to $k$ which is not maximal. Then, there are at least $n-k$ players in $A$ which can form a perfect matching with players in $X$. However since $M''$ is not maximal, some player $a\in A$ which is matched to an $x\in X$ wants to deviate to an unmatched acceptable player in $B$. 
	\end{proof}
	

As a corollary of Theorem~\ref{th:mg-ns-npc}, we obtain corresponding results for a number of hedonic games including \emph{hedonic games in RIRLC} ~\citep{Ball04a}, \emph{additively separable hedonic games}~\citep{Olse09a, SuDi10a} and \emph{hedonic games based on the best or worst players}~\citep{CeRo01a,AHP12a}.  
Before, we present the corollary, we will define the other classes of hedonic games for the help of the reader.


In the \emph{Representation by Individually Rational Lists of Coalitions} (RIRLC) for hedonic games, each player expresses his preferences only over his acceptable coalitions~\citep{Ball04a}. 
An \emph{additively separable hedonic game (ASHG)}  is pair $(N,v)$ such that each player $i\in N$ has value $v_i(j)$ for player $j$ being in the same coalition as $i$ and if $i$ is in coalition $S\in \mathcal{N}_i$, then $i$ gets utility $\sum_{j\in S\setminus \{i\}}v_i(j)$. For coalitions $S,T\in\mathcal{N}_i$, $S \succsim_i T$ if and only if $\sum_{j\in S\setminus \{i\}}v_i(j) \geq \sum_{j\in T\setminus \{i\}}v_i(j)$. 
Finally, we define B-hedonic games and W-hedonic games based on the best and worst players.
For a subset~$J$ of players, we denote by $\max_{\pref_i}(J)$ and $\min_{\pref_i}(J)$ the sets of the most and least preferred players in~$J$ by~$i$, respectively. We will assume that $\max_{\pref_i}(\emptyset)=\min_{\pref_i}(\emptyset)=\{i\}$. 
In a \emph{B-hedonic game} the preferences~$\pref_i$ of a player~$i$ over players extend to preferences over coalitions in such a way that, for all coalitions~$S$ and~$T$ in~$\mathcal N_i$, we have $S\Pref[i]T$ if and only if  
either some player in~$T$ is unacceptable to~$i$ or all players in~$S$ are acceptable to~$i$ and 
$j\mathop{\pref_i}k$ for all $j\in\max_{\pref_i}(S\setminus \{i\})$ and $k\in\max_{\pref_i}(T\setminus \{i\})$. Analogously, in a \emph{W-hedonic game $(N,\pref)$}, we have $S\Pref[i]T$ if and only if either some player in~$T$ is unacceptable to~$i$ or
$j\mathop{\pref_i}k$ for all $j\in\min_{\pref_i}(S\setminus \{i\})$ and $k\in\min_{\pref_i}(T\setminus \{i\})$.


\begin{corollary}\label{cor:ns-npc}
The problem of checking whether there exists a NS matching is NP-complete for the following: i) roommate games, ii) hedonic games in RIRLC~\citep{Ball04a}, and iii) additively separable hedonic games~\citep{Olse09a, SuDi10a}. 
iv) B-hedonic games, and v) W-hedonic games.
\end{corollary}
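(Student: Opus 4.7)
The plan is to derive each item from Theorem~\ref{th:mg-ns-npc} by exhibiting a polynomial-time reduction that embeds the marriage game $(N,\pref)$ constructed there into an instance of the target class in such a way that Nash stable partitions of the target instance are in bijection with Nash stable matchings of $(N,\pref)$. Part (i) is immediate since a marriage game is by definition a roommate game; the same instance works. For part (ii), note that every player in $(N,\pref)$ has at most $n$ acceptable coalitions (the singleton and the acceptable pairs), so writing them out gives a polynomial RIRLC representation whose induced preferences coincide with $\pref$, so NS matchings are preserved.

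For part (iii), I would define an ASHG on the same player set by setting $v_i(j)$ to a strictly positive integer that ranks acceptable partners in the order prescribed by $\pref_i$ (say, $v_i(j) = n - \text{rank}_i(j)$ for acceptable $j$, with ties broken so that indifferences are preserved), and $v_i(j) = -M$ for every unacceptable $j$, where $M$ is chosen larger than the sum of all positive values. Then for any pair $\{i,j\}$ the utility $v_i(j)$ reproduces $i$'s preference ranking of $j$ and is positive exactly when $j$ is acceptable, so singletons and pairs behave as in $(N,\pref)$. The key additional step is ruling out large coalitions: since the marriage game is bipartite, every coalition of size three or more contains two agents of the same sex who are mutually unacceptable, so by the choice of $M$ at least one member of such a coalition strictly prefers to deviate to being alone, hence no NS partition can contain a coalition of size $\geq 3$. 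The only deviations left to check are those to singletons or to existing pairs; joining an existing pair $\{j,k\}$ in a matching is never profitable because $\{j,k\}$ is man--woman, so one of them has the same sex as $i$ and contributes $-M$ to $i$'s utility. Hence NS partitions in the ASHG are exactly NS matchings in $(N,\pref)$.

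For parts (iv) and (v), I would use the same player set $N$ and the same individual preferences $\pref_i$ (with the same acceptability status on individuals) to define a B-hedonic (resp.\ W-hedonic) game; this is polynomial. For coalitions of size two these games rank pairs by the unique partner, which matches $\pref$. As in the ASHG argument, any coalition of size three or more in a bipartite setup contains a player with a same-sex mate, who is unacceptable, so by the B-hedonic (resp.\ W-hedonic) definition this coalition is dominated (for that player) by the singleton; thus no NS outcome contains a coalition of size $\geq 3$. A deviation of $i$ to an existing pair $\{j,k\}$ likewise introduces a member unacceptable to $i$ and is therefore strictly dominated by staying put or going to a singleton. Consequently NS outcomes in the B-hedonic and W-hedonic games coincide with NS matchings of $(N,\pref)$, and NP-completeness transfers.

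The main obstacle throughout is (c): verifying that the richer coalition structure permitted in ASHGs, B-hedonic, and W-hedonic games does not create new profitable deviations or new NS outcomes outside the matching regime. The bipartite structure of $M = A \cup \{y\}$ versus $W = B \cup X$ is what does the work in every case, since it forces any coalition of size $\geq 3$ to contain an unacceptable pair. Once this is established, the bijection between NS matchings and NS partitions follows routinely, and Theorem~\ref{th:mg-ns-npc} yields NP-completeness for each class.
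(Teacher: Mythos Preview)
Your proposal is correct and follows essentially the same approach as the paper: each item is obtained by embedding the marriage game from Theorem~\ref{th:mg-ns-npc} into the target class and using the bipartite structure (via pigeonhole) to rule out coalitions of size at least three, so that NS partitions coincide with NS matchings. Your treatment is in fact slightly more careful than the paper's in that you explicitly verify that deviations \emph{into} an existing pair are unprofitable in the ASHG and B/W-hedonic cases, a point the paper leaves implicit.
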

\begin{proof}
We address each of the cases separately.
\renewcommand{\labelenumi}{\roman{enumi}.}
\noindent
\begin{enumerate}
\item Roommate games are a generalization of marriage games with as compact a representation.
\item A marriage game can be reduced in linear time to a hedonic game in RIRLC which is linear in the size of the marriage game. Instead of each player having preferences over players, it has preferences over coalitions of size two with each coalition of course including the player himself. 
\item A marriage game $(N,\pref)$ can be reduced to an ASHG $(N,v)$ in which $v$ is defined as follows: $v_i(i)=0$; $v_i(j)\geq v_i(k)$ if and only if $j\succsim_i k$; and $v_i(j)$ is a suitably large negative valuation if $j$ is unacceptable to $i$ in $(N,\pref)$. Then, in any IR partition $\pi$ in game $(N,v)$, 
the pigeon-hole principle ensures that two members of the same sex are never together in the same coalition. It follows that an IR partition $\pi$ for $(N,v)$ is an individually rational matching for marriage game $(N,\pref)$. Furthermore, $\pi$ is NS in ASHG $(N,v)$ if and only if it is NS in the marriage game $(N,\pref)$.
\item  A marriage game $(N,\pref)$ is a B-hedonic game $G$ in which acceptable coalitions are of size one or two. If a coalition were of size more than two, then members of the same sex would be present which makes the coalition unacceptable. Therefore a partition is NS for $G$ if and only if it is NS in the marriage game $(N,\pref)$.
\item Same argument for W-hedonic games as for B-hedonic games.

\end{enumerate}
This complete the proof.
\end{proof}
	
Some of the statements in Corollary~\ref{cor:ns-npc} were known but required separate proofs for each particular class~~\citep{Ball04a, Olse09a, SuDi10a, AHP12a}.


It can also be shown that for roommate games, even checking whether there exists an IS matching is NP-complete. The proof utilizes a three-player roommate game for which no IS partition exists.

	\begin{theorem}\label{th:rg-is-npc}
		For roommate games, checking whether there exists an IS matching is NP-complete.
	\end{theorem}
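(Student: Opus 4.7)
NP membership is routine: given a candidate matching $\mu$, I would check in polynomial time, for each player $i$ and each coalition $T$ appearing in $\mu$ (including the empty coalition, which encodes a deviation to singleton), whether $i$ strictly prefers $T\cup\{i\}$ to $\mu(i)$ and whether every member of $T$ weakly prefers $T\cup\{i\}$ to $T$; the matching is IS iff no such witness exists.

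For NP-hardness, my plan is to exploit a cyclic three-player gadget as the source of obstruction. Take players $a_1,a_2,a_3$ with strict preferences $a_1:a_2\succ a_3\succ a_1$, $a_2:a_3\succ a_1\succ a_2$, $a_3:a_1\succ a_2\succ a_3$. A short case analysis shows no matching restricted to these three players is IS: in the all-singleton matching any $a_i$ beneficially moves to $\{a_{i+1}\}$ (who prefers $a_i$ to being alone); in any pair-plus-singleton matching $\{a_i,a_{i+1}\},\{a_{i+2}\}$ the player $a_{i+1}$ beneficially moves to $\{a_{i+2}\}$ (who again prefers $a_{i+1}$ to being alone), since $a_{i+1}$ prefers $a_{i+2}$ to $a_i$. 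So in isolation the gadget has no IS matching.

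The reduction itself would then be from a suitable NP-hard problem---for instance 3-SAT, or a restricted matching problem in the flavour of the {\sc MinimumMaximalMatching} used in Theorem~\ref{th:mg-ns-npc}. Given an instance, I would construct a roommate game containing the above gadget together with a single \emph{rescue} player $d$, placed at the top of $a_1$'s list with $a_1$ at the top of $d$'s list, and then design the external part of the game so that: (i)~in the YES case, $d$ can be freed up to pair with $a_1$ in an IS matching, and (ii)~in the NO case, $d$ is forced elsewhere in every IS matching. Under (i), the pair $\{a_1,d\}$ together with $\{a_2,a_3\}$ resolves the gadget---the only candidate deviation, $a_3$ joining $\{a_1,d\}$, is blocked because the resulting triple is unacceptable and would make $a_1$ and $d$ strictly worse off---and an IS completion of the external part yields a globally IS matching. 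Under (ii), the $3$-cycle is intact in every matching and hence no IS matching exists.

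The main obstacle will be designing the external part of the game so that the availability of $d$ to $a_1$ is tightly coupled to the NP-hard condition in both directions while introducing no spurious IS deviations. A counting-based bridge in the spirit of Theorem~\ref{th:mg-ns-npc}, where a shortage of partners forces an undesired move, looks the most natural vehicle: a maximality or covering condition can be encoded as ``exactly one partner is left free for the gadget'', thereby transferring NP-hardness while keeping the rest of the construction individually rational and free of IS-violating moves.
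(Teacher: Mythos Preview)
Your three-player cyclic gadget is exactly the engine the paper uses, and your instinct to bridge it to {\sc MinimumMaximalMatching} via a counting argument is also the paper's route. Where your proposal diverges is the architecture: you propose a \emph{single} gadget with a \emph{single} external rescue player~$d$, and ask the external part to make $d$'s availability NP-hard to determine. This is where the gap lies. With MMM on a bipartite graph $G=(A\cup B,E)$ and threshold $k$, the natural counting is that a maximal matching of size at most $k$ leaves at least $n-k$ players of $A$ unmatched; there is no clean way to collapse that surplus down to ``exactly one rescuer is free'' without building another absorbing structure whose own IS-analysis you would then owe. Conversely, encoding a single-bit availability of one distinguished player $d$ does not obviously carry the full hardness of MMM.

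The paper resolves this by replicating your gadget $n-k$ times. There is no separate $d$; the rescuers are the $A$-players themselves. Each $a\in A$ ranks its $B$-neighbours first, then all gadget players (indifferently), then itself; each $b\in B$ ranks only its $A$-neighbours; and within the $i$th triple $x_i^0,x_i^1,x_i^2$ the cyclic preferences are exactly yours, with all of $A$ tied at the top alongside the cyclic favourite. Now the counting is tight in both directions: a maximal matching of size $\le k$ frees $\ge n-k$ players of $A$, one per triple, and the remaining two gadget players in each triple pair off, yielding an IS matching; if every maximal matching has size $>k$, then either some triple is left without an $A$-rescuer (and your cycling argument applies), or the induced $A$--$B$ matching is non-maximal and some rescuing $a$ has an IS deviation to an unmatched $B$-neighbour. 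Your proposal already contains all the moving parts---you just need to drop the intermediary $d$ and let the number of gadgets do the counting for you.
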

		\begin{proof}[Sketch]
			We present a polynomial-time reduction from {\sc MinimumMaximalMatching} (MMM) to checking whether there exists an IS matching for a roommate game. 
			
			As in the proof of Theorem~\ref{th:mg-ns-npc}, we assume that the instance of MMM is a bipartite graph in which the vertices $V$ are partitioned into sets $A$ and $B$. 
			We construct a roommate game $(N,\pref)$ where $N=V\cup X$ where $X=\{x_1^0, x_1^1, x_1^2,\ldots, x_{n-k}^0, x_{n-k}^1, x_{n-k}^2\}$ and the player preferences are as follows:
	\begin{align*}
	a 	&: \{b\in B\midd \{a,b'\}\in E\} \succ_a X \succ_a a \succ_a \allelse& \forall a\in A\\
	b   &: \{a'\in A \midd \{a',b\} \in E\} \succ_b b \succ_b \allelse	&\forall b\in B\\
	x_i^{j}&:A \sim_{x_i^{j}} x_i^{(j+1) mod 3} \succ_{x_i^j} x_i^{(j-1) mod 3} \succ_{x_i^j} x_i^j \succ_{x_i^j}\allelse	&\\
	&	\forall i\in \{1,\ldots,n-k \} , j\in \{0,1,2\}
	\end{align*}
	The preferences are set in such a way that each triplet $x_i^0, x_i^1, x_i^2$ for $i\in \{1,\ldots n-k\}$ is in a perpetual cycle of deviations $\{x_i^0, \{x_i^1, x_i^2\}\}$, $\{\{x_i^2,x_i^0\}, \{x_i^1\}\}$, $\{\{x_i^2\}, \{x_i^0,x_i^1\}\}$, and back to  $\{x_i^0, \{x_i^1, x_i^2\}\}$. And this cycling can only be stopped by the help of a players in $A$ which is not already in a coalition with a player $b\in b$ such that $(a,b)\in E$. 

	The claim is that MMM has a `yes' instance (i.e., there exists a maximal matching of size $k$) if and only if there exists an IS  matching for $(N,\pref)$. If there exists a maximal matching $M$ of size $k$ or less, then there are at least $n-k$ players in $A$ which are not matched in $M$. Then for each $i\in \{1,\ldots, n-k\}$ a player $x_i^0$ is matched to one of the unmatched $a\in A$. The $x_i^0$'s have no incentive to deviate any where. Each $x_i^1$ and $x_i^2$ are matched to each other and have no incentive to deviate. Similarly, no player in $B$, $A$, and $\{y\}$ has an incentive to deviate. Thus there exists an IS matching.

		Assume that each maximal matching $M'$ in $G$ has size greater than $k$. Then, there are at most $n-k-1$ players in $A$ which are not matched in $M'$ and which can be matched with $n-k$ players in $X$. Therefore not all $n-k$ triplet IS cycles of $x_i^0, x_i^1, x_i^2$ can be disrupted. Therefore, there does exist one set of players $x_i^0, x_i^1, x_i^2$ which is in a perpetual IS cycle. Therefore there exists no IS matching. It could have been the case that if $M'$ was not maximal, there would be enough free players from $A$ which could cater for players in $X$. Consider a matching $M''$ of size less than or equal to $k$ which is not maximal. Then, there are at least $n-k$ players in $A$ which can form a perfect matching with players in $X$. However since $M''$ not maximal, some player in $a$ wants to deviate to some $b\in B$ such that $(a,b)\in E$.
		\end{proof}


\section{Positive results}

In this section, we present a number of positive computational and existence results concerning individual-based stability in marriage and roommate games.
Firstly, we can use a potential-function argument to show the following.
	
\begin{proposition}\label{prop:RG-CIS-easy}
For every roommate game, a CIS and IR matching exists and can be computed in $O(n^2)$. 
\end{proposition}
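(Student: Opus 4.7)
My plan is to exhibit a greedy algorithm that outputs a CIS and IR matching, and to verify stability by a direct case analysis; the potential-function argument hinted at in the statement then arises as a compact existence proof. Initialize $\pi$ so that every player is a singleton (vacuously IR), fix an arbitrary ordering of $N$, and process players in turn: when player $i$ is reached, if $i$ is still single, scan $i$'s preference list from best to worst and pair $i$ with the first $j$ satisfying (i) $j$ is currently single, (ii) $\{i,j\} \succ_i \{i\}$, and (iii) $\{i,j\} \succsim_j \{j\}$. IR of the output is immediate from (ii) and (iii), and the monotonicity of pairings (once matched, always matched) means each player's scan is executed at most once.

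To verify CIS, I use that all coalitions in a roommate game have size at most two, so starting from an IR matching the only possible CIS deviations are (a) two singletons pair up, or (b) a matched player leaves his partner to join a singleton, with the abandoned partner at least as happy alone. For (a), suppose singletons $s, t$ satisfy $\{s,t\} \succ_s \{s\}$ and $\{s,t\} \succsim_t \{t\}$; since $t$ is single at the end it was single throughout, so $t$ satisfied (i)--(iii) during $s$'s scan, contradicting the fact that $s$ finished unpaired. For (b), with current pair $\{i,j\}$ and target singleton $k$: if $j$ was the proposer then $\{i,j\} \succ_j \{j\}$ contradicts the required $\{j\} \succsim_j \{i,j\}$; if $i$ was the proposer then transitivity gives $\{i,k\} \succ_i \{i\}$, so $k$ was eligible and strictly preferred to $j$ during $i$'s scan, contradicting the greedy choice.

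The potential function $\Phi(\pi) = \sum_i u_i(\pi(i))$, with $u_i$ any real-valued function consistent with $\succsim_i$, strictly increases under every CIS deviation (the deviator strictly gains, no other player loses), giving an abstract termination proof from any IR starting matching. For the $O(n^2)$ bound, preprocess in $O(n^2)$ time a Boolean table indicating $\{i,j\} \succsim_j \{j\}$ for each ordered pair $(i,j)$, so that condition (iii) is an $O(1)$ lookup; each player's scan then runs in $O(n)$ time. The principal subtlety is the asymmetry between the strict improvement demanded of the deviator and the weak acceptance allowed of everyone else---this forces condition (iii) to be weak preference (otherwise a singleton could be left blocking via an indifferent target) and motivates the proposer-versus-receiver case split in (b).
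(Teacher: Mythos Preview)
Your proof is correct. The paper's own argument is precisely the potential-function route you sketch in your third paragraph: start from the all-singletons partition (which is IR), execute an arbitrary CIS deviation whenever one exists, and observe that each deviation strictly improves the deviator while hurting no one, so at most $n(n-1)$ deviations can occur before a CIS and IR matching is reached.

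Your primary argument---the one-pass greedy algorithm with proposer/receiver case analysis---is a genuinely different and more constructive approach. The case split in (b) is the key idea the paper does not need: because the paper never commits to a specific sequence of moves, it avoids having to argue about who initiated a pair. What your route buys is an explicit, deterministic output and a transparent $O(n^2)$ running-time accounting (the paper's sketch bounds the \emph{number} of deviations by $O(n^2)$ but is silent on the cost of locating each one). What the paper's route buys is brevity: no structural case analysis is required, only the Pareto-monotonicity of the utility vector under CIS moves. Your handling of the weak-versus-strict asymmetry in condition~(iii), and the observation that a player single at termination was single throughout, are exactly the points that make the greedy argument go through.
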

\begin{proof}[Sketch]
Take the IR partition of singletons. 
If the partition is CIS, we are done. 
Otherwise, if there is a feasible CIS deviation, enable it. 
In each CIS deviation at least one player strictly improves his utility and no player's utility decreases. 
Since there can only be a maximum of $n(n-1)$ CIS deviations, a CIS and IR partition is obtained in $O(n^2)$. 
\end{proof}

What is much more surprising is that although an IS matching is not guaranteed to exist for roommate games (Theorem~\ref{th:rg-is-npc}), a CNS matching is.\footnote{Therefore, we identify an important class of hedonic games other than weakly mutual separable hedonic games~\citep{Sudi07b} for which a CNS partition is guaranteed.}

\begin{theorem}\label{th:ops}
For every roommate game, a CNS matching exists and can be computed in $O(n^2)$. 
\end{theorem}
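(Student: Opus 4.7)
My plan is to construct a CNS matching via a one-pass serial procedure and verify CNS-stability by enumerating the possible improving deviations.

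Algorithm: fix an arbitrary ordering of the $n$ players and process them one at a time. When processing player $i$, if $i$ is already matched, skip; otherwise scan $i$'s preference list from the top and pair $i$ with the most preferred currently-unmatched player $j$ satisfying $\{i,j\}\sPref[i]\{i\}$ (i.e., $i$ likes $j$), leaving $i$ as a singleton if no such $j$ exists. Each player is processed once and one scan costs $O(n)$, giving total running time $O(n^2)$. A key property is monotonicity of matched status: once paired, a player stays paired for the remainder of the run.

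To show that the output $\pi$ is CNS, note that in a roommate game every coalition of size three or more is unacceptable, so the only candidate CNS-deviations for a player $i$ are: (i) $i$ is an unmatched singleton and forces himself onto an unmatched singleton $k$; (ii) $i\in\{i,j\}$ moves into an unmatched singleton $\{k\}$; (iii) $i\in\{i,j\}$ leaves to become a singleton. In (ii) and (iii) the CNS condition additionally requires $\{j\}\Pref[j]\{i,j\}$, i.e., $j$ does not like $i$. For (i), by monotonicity $k$ was unmatched at $i$'s processing, so the algorithm would have paired $i$ with $k$ (or a player $i$ prefers even more), contradicting $i$'s being unmatched at the end. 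For (ii) and (iii), consider the step at which $\{i,j\}$ was created and let $p\in\{i,j\}$ be the player whose turn it was; by the selection rule $p$ likes the other. The case $p=j$ gives $\{i,j\}\sPref[j]\{j\}$, directly contradicting $\{j\}\Pref[j]\{i,j\}$. The case $p=i$ gives $j\sPref[i]i$, which contradicts (iii)'s condition $\{i\}\sPref[i]\{i,j\}$; and for (ii) we have $k\sPref[i]j\sPref[i]i$, so $k$ is strictly acceptable to $i$ and, by monotonicity, was unmatched when $i$ was processed, so $i$ would have selected $k$ (or an even more preferred option) rather than $j$, a contradiction.

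The main subtlety is case (ii), where a matched player could in principle wish to trade up to a strictly better unmatched partner; the best-strict-want selection rule together with the monotonicity of matched status is precisely what rules this out, since any such strictly better alternative would have been available and preferred at the moment $\{i,j\}$ was created.
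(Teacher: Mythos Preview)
Your proof is correct and takes a genuinely different route from the paper. The paper argues via dynamics: it starts from the all-singletons partition, lets \emph{arbitrary} CNS deviations occur, and shows termination by a potential-style argument (the set $B$ of players who have moved only grows, and each mover's utility is non-decreasing because his new partner can never CNS-leave him). Your approach instead fixes a specific one-pass greedy construction---each player, in turn, grabs his favourite strictly-acceptable still-unmatched partner---and then verifies CNS of the output directly via the processor/monotonicity argument. Your method yields a concrete deterministic algorithm and a cleaner case analysis; the paper's method, on the other hand, additionally tells you that \emph{any} sequence of CNS deviations from the singleton start converges in $O(n^2)$ steps, which is a stronger dynamic statement.

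One small point: when you dismiss deviations into an existing pair on the grounds that size-three coalitions are unacceptable, this is only immediate when the deviator's current coalition is itself acceptable to him (which need not hold for the non-processor in a pair your algorithm forms). However, your own case~(ii)/(iii) reasoning already closes this: if $i\in\{i,j\}$ and the CNS release condition $\{j\}\succsim_j\{i,j\}$ holds, then necessarily $i$ was the processor, hence $\{i,j\}\succ_i\{i\}\succ_i\{i,a,b\}$ and no such deviation benefits $i$. It would be worth stating this explicitly. (The paper's sketch makes the same hand-wave, so you are in good company.)
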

\begin{proof}[Sketch]
Let $\pi$ be the partition of singletons.
Set $B$, the set of players with at least one CNS deviation to the empty set. 
Let arbitrary CNS deviations take place from $\pi$ and update $B$ accordingly. 
The argument is that CNS deviations will not cycle, at least not if the starting configuration is $\pi$.  
Let $i\in N$ be the player which deviates from his current coalition $\pi(i)$ to another coalition $\{j\}$. Then, $i$ is guaranteed to never decrease his utility because he does not permit $j$ to move 
away. Clearly no player $k$ wants to join $\{i,j\}$ as a coalition of size three is unacceptable to each $k$. Therefore, players in $B$ can only improve their utility and cannot end up in a previous partition. 
In each deviation, either set $B$ grows or a player in $B$ improves his utility. Therefore, there can only be $O(n^2)$ deviations until a CNS matching is achieved.
\end{proof}

Interestingly, even in the absence of any preference restrictions, marriage games admit at least one IS matching which can be computed efficiently.

\begin{theorem}\label{th:mg-is-oddcycle}
For marriage games, an IS matching is guaranteed to exist.
Moreover, it can be computed in $O(n^2)$.
\end{theorem}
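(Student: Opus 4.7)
The plan is to reduce the problem to the classical Gale--Shapley deferred acceptance algorithm by first refining the given weak preferences into a strict profile through a specific tie-breaking rule, and then running men-proposing Gale--Shapley on the refined instance.

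First I would define a strict preference profile $\succ^*$ that refines $\succsim$ as follows: for every player $i$ and every $j$ with $\{i,j\}\sim_i\{i\}$, declare $\{i,j\}\succ^*_i\{i\}$, i.e., always break a tie between a potential partner and being alone in favour of matching; all remaining ties I would break arbitrarily into a total order on $\mathcal{N}_i$. Setting up $\succ^*$ takes $O(n^2)$ time, and running men-proposing Gale--Shapley on the refined instance then produces, in $O(n^2)$ steps, a matching $\pi$ that is core stable under $\succ^*$.

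Next I would show that $\pi$ is IS under the original $\succsim$. Since $\succ^*$ refines $\succsim$, core stability of $\pi$ under $\succ^*$ entails individual rationality of $\pi$ under $\succsim$. As coalitions of size three are unacceptable in a marriage game, the only candidate IS deviations from an IR matching are those in which a player---say man $m$---leaves $\pi(m)$ to join an existing singleton $\{w'\}$, with $\{m,w'\}\succ_m\pi(m)$ and $\{m,w'\}\succsim_{w'}\{w'\}$ in the original instance. The first condition, being strict, lifts to $\{m,w'\}\succ^*_m\pi(m)$; and the tie-breaking rule at $w'$---applied, if necessary, to the borderline case $\{m,w'\}\sim_{w'}\{w'\}$---gives $\{m,w'\}\succ^*_{w'}\{w'\}$. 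Hence $(m,w')$ is a blocking pair of $\pi$ in the refined instance, contradicting core stability. A symmetric argument rules out IS deviations by women.

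The hard part will be pinning down a tie-breaking rule that preserves all strict original preferences while simultaneously upgrading every ``weakly acceptable'' configuration to a strictly preferred one; this upgrade is what turns a candidate IS deviation (which merely requires weak acceptability at the recipient) into a genuine blocking pair of the refined strict instance (which demands strict acceptability). Once the right tie-breaking is in place, both existence and the $O(n^2)$ running time follow from the classical Gale--Shapley analysis.
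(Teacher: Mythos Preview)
Your proposal is correct and follows essentially the same approach as the paper: raise the preferences so that any partner tied with being alone becomes strictly preferred to being alone, run Gale--Shapley on the modified instance, and argue that any IS deviation in the original game would yield a blocking pair in the modified one. The only minor differences are that the paper raises preferences without fully linearising the remaining ties and then argues case-by-case through the mechanics of the (women-proposing) deferred acceptance procedure, whereas you break all ties up front and invoke core stability of the strict instance as a black box; your packaging is arguably cleaner, but the key idea is identical.
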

\begin{proof}[Sketch]
We present a constructive proof of the existence of an individually stable matching for marriage games. It is already known that a (core) stable matching exists for marriage games. However, in the presence of ties, core stability does not imply individual stability.

Given a marriage game $(N,\pref)$, first \emph{raise} the preferences $\pref$ to $\pref'$ in order to obtain the new modified game $(N,\pref')$. By raising preferences from $\pref$ to $\pref'$, we mean the following: for all $b,c\in N\setminus \{a\}$, $b \pref_{a} c$ if and only if $b \pref_{a}' c$ but if $b \sim_{a} a$ then $b \succ_{a}' a$. Now run the women-optimal version of the Gale-Shapley algorithm on $(N,\pref')$.
The claim is that the resultant matching $\pi$ is IS for the original game $(N,\pref)$. Note that there are no core deviations in $\pi$ according to preferences $\pref$ and also $\pref'$.

Since core stability implies individual rationality, the only possible IS deviations according to preferences $\pref$ are as follows:
1. Matched man has a valid IS deviation to an unmatched woman;  
2. Matched woman has a valid IS deviation to an unmatched man;
3. Unmatched woman has a valid IS deviation to an unmatched man;
4. Unmatched man has a a valid deviation  to an unmatched woman;

\begin{enumerate}
	\item If man $m_i$ matched to $w_j$ has a valid IS deviation to an unmatched woman $w_k$, then clearly $w_k \succ_{m_i} w_j$ and $m_i \sim_{w_k} w_k$. This means that $m_i \succ_{w_k}' w_k$ and woman $w_k$ must have proposed to $m_i$ in the algorithm. If $m_i$ was single, he would have paired with $w_k$ and if he was engaged to $w_j$, then he would broken off the engagement and paired with $w_k$. 
	\item If woman $w_j$ matched to $m_i$ has a valid IS deviation to an unmatched man $m_l$, then clearly 
	$m_l \succ_{w_j} m_i$ and $w_j \succsim_{m_l} m_l$. Therefore, $w_j$ would have proposed to $m_l$ before she proposed to $m_i$. Therefore $\{w_j,m_l\}$ would have been a pair in the first place.
	\item If unmatched women $w_j$ has a valid IS deviation to unmatched man $m_j$, then $m_i \succ_{w_j} w_j$ and woman $w_j$ would have proposed to $m_j$ in the algorithm and paired up with him.
	\item If unmatched man $m_i$ has a valid IS deviation to unmatched man $w_j$, then $w_j\succ_{m_i} m_i$ and $m_i\sim_{w_j} w_j$. This implies that $m_i\succ_{w_j}' w_j$. Therefore woman $w_j$ would have proposed to $m_i$ and they would have paired up. 	
\end{enumerate}
\end{proof}

\begin{table*}[t]
\centering
	\scalebox{0.9}{
\centering

\begin{tabular}{lllll}

\toprule

Game&Preference&Problem&Stability&Complexity\\
&Restrictions&setting&concept&\\

\midrule
RG&&(SRTI)&\contract \& IR&in P (Prop.~\ref{prop:RG-CIS-easy})\\
RG&&(SRTI)&\nash&NP-C~(Cor.\ref{cor:ns-npc})\\
RG&&(SRTI)&\indiv&NP-C~(Th.~\ref{th:rg-is-npc})\\
RG&&(SRTI)&CNS&in P (Th.~\ref{th:ops})\\


\midrule
RG&No unacceptability&(SRT)&\nash, \indiv&in P (Th.~\ref{th:rg-noaccept-is-easy})\\

\midrule

MG&&(SMTI)&\nash&NP-C~(Th.~\ref{th:mg-ns-npc})\\
MG&&(SMTI)&\indiv&in P (Th.~\ref{th:mg-is-oddcycle})\\

\midrule
MG&No unacceptability&(SMT)&\indiv&in P (Th.~\ref{th:mg-is-oddcycle})\\
MG&No unacceptability&(SMT)&\nash&in P~(Cor.~\ref{th:mg-ns-noaccept})\\

\bottomrule
\end{tabular}
}
\caption{Complexity of individual-based stability in marriage and roommate games. The NP-completeness result for checking the existence of a NS matching for a marriage game also applies to a number of representations and classes of hedonic games (Corollary~\ref{cor:ns-npc}).}
\label{table:rg-mg-complexity2}

\end{table*}

One may wonder whether the negative results in the previous section 
can be circumvented by disallowing players to express other players as unacceptable.
Preferences are \emph{mutual} if whenever a player considers another player acceptable, then the other player also considers the first player acceptable.

\begin{proposition}\label{th:rg-consistent-ns=is}
For roommate games and marriage games with mutual preferences, a NS matching exists if and only if an IS matching exists.
\end{proposition}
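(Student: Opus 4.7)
The forward direction is immediate from Figure~\ref{fig:tnfigure}: every NS matching is IS, so if a NS matching exists then an IS matching exists. The content of the claim is therefore the converse: if an IS matching exists, then a NS matching exists, and in fact the \emph{same} matching works. I would prove this by showing that under mutual preferences, every IS matching in a roommate or marriage game is already NS.

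The plan is to take an IS matching $\pi$ and consider a hypothetical NS-improving deviation by some player $i$ to an existing coalition $T\in\pi\cup\{\emptyset\}$, i.e.\ $T\cup\{i\}\sPref[i]\pi(i)$. I would derive a contradiction by a short case analysis on $|T|$. Since $\pi$ is IS it is IR, so $\pi(i)\Pref[i]\{i\}$. If $T=\emptyset$ then $T\cup\{i\}=\{i\}$ would have to strictly improve upon $\pi(i)$, contradicting IR. If $|T|=2$ then $T\cup\{i\}$ is a coalition of size three, which by the definition of a roommate game is unacceptable, and hence cannot strictly dominate the acceptable coalition $\pi(i)$. So the only surviving case is $|T|=1$, say $T=\{j\}$; and because $T$ is an existing coalition in $\pi$, player $j$ is a singleton in $\pi$, that is $\pi(j)=\{j\}$.

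In this remaining case, the inequality $\{i,j\}\sPref[i]\pi(i)\Pref[i]\{i\}$ says precisely that $i$ finds $j$ acceptable. Mutuality of preferences then yields $\{i,j\}\Pref[j]\{j\}=\pi(j)$, so $j$ is not made worse off by $i$ joining. Hence the supposed NS deviation by $i$ is in fact a valid IS deviation, contradicting the assumption that $\pi$ is IS. This forces $\pi$ to be NS and completes the proof.

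The only step that requires any care is making sure the remaining case really forces $j$ to be a singleton, which uses the convention that in NS (and IS) the destination $T$ must be an existing coalition of $\pi$ (if instead $\pi(j)=\{j,k\}$ with $k\neq i$, then joining would produce a size-three coalition, already ruled out). Everything else is a one-line application of the definitions together with the mutuality hypothesis, so I do not expect a genuine obstacle; the main point is simply to notice that mutuality is exactly what closes the gap between NS and IS in the one surviving case.
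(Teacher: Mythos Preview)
Your proposal is correct and follows essentially the same approach as the paper: both argue that under mutual preferences an IS matching is already NS, by observing that a profitable NS deviation must target a singleton $\{j\}$ and that mutuality then makes $j$ willing to accept, yielding an IS deviation. Your explicit case split on $|T|$ (ruling out $T=\emptyset$ via IR and $|T|=2$ via unacceptability of size-three coalitions) is more careful than the paper's version, which jumps directly to the singleton case without spelling these out.
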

\begin{proof}
A NS matching is of course IS. Therefore the right implication follows trivially. 
Assume a matching $\pi$ is not NS. Then, there exist a pair $\{a_i,a_j\}$ in $\pi$ such that $a_i$ wants to deviate to a singleton coalition $\{a_k\}$. Since $a_k$ finds $a_i$ acceptable, therefore it does not object to $a_i$ joining him. This means that $\pi$ has a valid IS deviation and $\pi$ is not IS.
\end{proof}
\begin{corollary}\label{th:mg-ns-noaccept}
For marriage games with no unacceptability, a NS matching is guaranteed to exist.
Moreover, it can be computed in $O(n^2)$.
\end{corollary}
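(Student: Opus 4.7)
The plan is to obtain this as an immediate corollary of Proposition~\ref{th:rg-consistent-ns=is} together with Theorem~\ref{th:mg-is-oddcycle}. First, I would observe that when the marriage game has no unacceptability, every man finds every woman acceptable and vice versa, so the preferences are trivially mutual in the sense required by Proposition~\ref{th:rg-consistent-ns=is}. The hypothesis of that proposition is therefore satisfied for free.

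Next, I would invoke the proof of Proposition~\ref{th:rg-consistent-ns=is}, noting that it actually establishes the stronger per-matching statement: under mutual preferences, any IS matching is already NS. The contrapositive argument given there shows that if some matched player $a_i$ has an NS deviation to a singleton $\{a_k\}$, then mutuality forces $a_k$ to accept $a_i$, so the same move is also an IS deviation. Hence an IS matching cannot have any NS-violating move.

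Then I would apply Theorem~\ref{th:mg-is-oddcycle}, which gives a constructive $O(n^2)$ procedure (raise the preferences to $\pref'$ and run the women-optimal Gale--Shapley algorithm) that produces an IS matching for every marriage game, including the no-unacceptability special case. By the previous paragraph, the matching returned by this procedure is NS, and the construction itself takes $O(n^2)$ time.

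I do not anticipate any real obstacle: the only thing to verify is that ``no unacceptability'' implies mutuality, which is immediate, and that the IS matching produced by Theorem~\ref{th:mg-is-oddcycle} is not degenerate in this restricted setting (it is not, since the theorem holds for every marriage game). No new algorithm, no new combinatorial argument, and no potential-function reasoning is needed beyond what is already in the paper.
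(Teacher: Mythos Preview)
Your proposal is correct and matches the paper's intended argument exactly: the corollary is obtained by combining Proposition~\ref{th:rg-consistent-ns=is} (no unacceptability trivially gives mutual preferences, so IS and NS coincide) with the $O(n^2)$ construction of an IS matching from Theorem~\ref{th:mg-is-oddcycle}. The paper provides no explicit proof beyond this implication, so there is nothing to add.
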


As a corollary, for marriage games with no unacceptability, a NS matching is guaranteed to exist.
Moreover, it can be computed in $O(n^2)$.
It was seen that there is a marked contrast between marriage and roommate games regarding the complexity of individual stability.  However, if there is no unacceptability, it can be checked efficiently whether a NS or IS matching exists for roommate games.

\begin{theorem}\label{th:rg-noaccept-is-easy}
For roommate games with no unacceptability, it can be checked in $O(n^4)$ whether a NS matching or an IS matching exists.
\end{theorem}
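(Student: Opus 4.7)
The plan is to use Proposition~\ref{th:rg-consistent-ns=is} first: since no unacceptability implies mutual preferences, an NS matching exists if and only if an IS matching exists, so it suffices to decide existence of an NS matching. The approach then rests on a structural reduction plus a matching computation for each candidate unmatched player.

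First, I would establish that any NS matching can be assumed to leave at most one player unmatched. If $i$ and $j$ are both unmatched in an NS matching $\pi$, then $i$'s refusal to deviate from $\{i\}$ to $\{j\}$, combined with $\{i,j\} \succsim_i \{i\}$ from no unacceptability, forces $\{i,j\} \sim_i \{i\}$; symmetrically $\{i,j\} \sim_j \{j\}$. Replacing the singletons $\{i\}, \{j\}$ with the pair $\{i,j\}$ preserves NS, because size-3 coalitions remain unacceptable (so no one wants to join the new pair), becoming singleton is not a strict improvement for $i$ or $j$ (by the derived indifference), and deviating to any other remaining singleton $\{l\}$ is likewise indifferent by the same argument. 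Iterating shrinks the unmatched set to size at most one.

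Next I would split on parity. If $n$ is even, any perfect matching of $N$ is trivially NS: there are no singletons to deviate to, becoming singleton is not a strict improvement by no unacceptability, and joining another pair would form a size-3 coalition which is unacceptable. If $n$ is odd, a minimal NS matching leaves exactly one player $k$ unmatched, and I would iterate over the $n$ candidates for $k$. For each $k$, build the graph $G_k$ on $N \setminus \{k\}$ whose edges are the pairs $\{i,j\}$ satisfying both $j \succsim_i k$ and $i \succsim_j k$. An NS matching with $k$ unmatched exists iff $G_k$ has a perfect matching, since the only non-trivial deviation for a matched $i$ with partner $j$ is to join $\{k\}$, which is profitable iff $k \succ_i j$, exactly the condition ruled out by the edge of $G_k$.

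The hard part, I expect, is carefully verifying the structural reduction---in particular that replacing two singletons by a pair introduces no new profitable deviations for previously-matched players. This hinges on the fact that any coalition of size three is unacceptable, so joining the newly formed pair is never strictly better than any current situation; the previously-unmatched players $i, j$ gain no new strict deviations either, since all their would-be targets in the shrunken unmatched set are indifferent to them by the same argument used to derive $\{i,j\} \sim_i \{i\}$. For complexity, constructing $G_k$ costs $O(n^2)$, and finding a perfect matching in a general graph takes $O(n^3)$ via Edmonds' blossom algorithm; iterating over the $n$ choices of $k$ yields the advertised $O(n^4)$ bound.
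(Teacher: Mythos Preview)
Your proposal is correct and follows essentially the same route as the paper: reduce to NS via Proposition~\ref{th:rg-consistent-ns=is}, observe that any perfect matching is NS when $n$ is even, and for odd $n$ iterate over the candidate lone singleton $k$, testing for a perfect matching in the auxiliary graph $G_k$ via Edmonds' algorithm. The only difference is that you spell out in detail why an NS matching with at least two unmatched players can be shrunk to one with fewer singletons, whereas the paper simply asserts this; your argument for that step is sound.
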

\begin{proof}[Sketch]
From Proposition~\ref{th:rg-consistent-ns=is}
, we just need to check whether a NS matching exists or not. 
If $n$ is even, then we are already done as any perfect matching of players is not only IR but also NS.
The problem becomes interesting if $n$ is odd. If there exists a NS matching in which there are more than one unmatched players (singleton coalitions), then there also exists a NS matching in which there is exactly one unmatched player. Therefore, we need to check whether there exists a NS matching with one unmatched player or not. Take any player $i$ to be the unmatched player. Then, we want to check whether the other players in $N\setminus \{i\}$ can be matched so that no player has an incentive to deviate to $i$. There is no other possible deviation as each player is matched with an acceptable player and does not have an incentive to become alone. Form an undirected graph $G_i=(V,E)$ such that $V=N\setminus \{i\}$ and $E$ is defined as follows: $\{j,k\}\in E$ if and only if $k\succsim_j i$ and $ j \succsim_k i$. Then, if there exists a perfect matching $M$ of $G_i$, return $M\cup \{\{i\}\}$ as the NS matching (it can be checked in  $O(n^3)$ whether a graph contains a perfect matching~\citep{Edmo65a}). It is clear that no players in $N\setminus \{i\}$ want to leave their partners and deviate to the unmatched $i$. 
Otherwise, repeat with another $i\in N$. If there exists no perfect matching in $G_i$ for all $i\in N$, return `no'.
\end{proof}

\vspace{-.25em}
\section{Conclusions}
\vspace{-.25em}

We examined the computation of individual-based stable outcomes in matching problems and focused on marriage and roommate problems. A complete characterization was achieved (please see Table~\ref{table:rg-mg-complexity2}). As a corollary of Theorem ~\ref{th:mg-ns-npc}, we also gave simple proofs for some results in the hedonic games literature. Our computational analysis also led to constructive arguments for the existence of a CNS matching for each roommate game, and an IS matching for each marriage game.

There are some interesting contrasts in our results. Although IS and CNS are defined in a symmetric way, we saw that a CNS matching is guaranteed to exist for roommate games whereas even checking the existence of an IS matching is NP-complete. 
In the case of marriage games, we noticed a contrast between the complexity of computing an IS matching and computing a NS matching. Finally, one may naively expect that group-based stability like core may be harder to examine than individual-based stability. However, we note that for marriage games, finding Nash stable matching is intractable whereas computing a core stable matching is polynomial-time solvable.

It will be interesting to consider individual-based stability in other models such as hospitals/residents matching. 
It remains to be seen whether the absence of ties in the preferences can affect the NP-completeness results.
\citet{Knut76a} proved that core deviations can cycle in marriage games. It is also easy to see that Nash deviations can also cycle in marriage games. It will be interesting to investigate  whether IS deviations can cycle in marriage games or not. Finally, the complexity of computing a matching which is both CNS and IR is also open.


\def\bibfont{\small}




\end{document}